\documentclass[journal]{IEEEtran}
\usepackage{amsfonts}
\usepackage{amssymb}
\usepackage[cmex10]{amsmath}
\usepackage{graphicx}
\usepackage{subfigure}
\usepackage{verbatim}
\usepackage{cite}
\usepackage{diagbox}
\usepackage{tabularx,booktabs}
\usepackage{amsthm}
\usepackage{booktabs}
\usepackage[ruled,lined]{algorithm2e}
\usepackage{multirow}
\usepackage{setspace}
\usepackage{stfloats}
\usepackage{fancyhdr}
\usepackage{float}
\usepackage{algorithmic}
 \usepackage{enumerate}

\hyphenation{op-tical net-works semi-conduc-tor IEEEtran}

\IEEEoverridecommandlockouts
\begin{document}

\newcommand{\be}{\begin{equation}}
\newcommand{\ee}{\end{equation}}
\newcommand{\br}{{\mbox{\boldmath{$r$}}}}
\newcommand{\bp}{{\mbox{\boldmath{$p$}}}}
\newcommand{\bpi}{\mbox{\boldmath{ $\pi $}}}
\newcommand{\bn}{{\mbox{\boldmath{$n$}}}}
\newcommand{\balfa}{{\mbox{\boldmath{$\alpha$}}}}
\newcommand{\ba}{\mbox{\boldmath{$a $}}}
\newcommand{\bta}{\mbox{\boldmath{$\beta $}}}
\newcommand{\bg}{\mbox{\boldmath{$g $}}}
\newcommand{\bPsi}{\mbox{\boldmath{$\Psi $}}}
\newcommand{\bsigma}{\mbox{\boldmath{ $\Sigma $}}}
\newcommand{\bGamma}{{\bf \Gamma }}
\newcommand{\bA}{{\bf A }}
\newcommand{\bP}{{\bf P }}
\newcommand{\bX}{{\bf X }}
\newcommand{\bI}{{\bf I }}
\newcommand{\bR}{{\bf R }}
\newcommand{\bZ}{{\bf Z }}
\newcommand{\bz}{{\bf z }}
\newcommand{\bx}{{\mathbf{x}}}
\newcommand{\bM}{{\bf M}}
\newcommand{\bU}{{\bf U}}
\newcommand{\bD}{{\bf D}}
\newcommand{\bJ}{{\bf J}}
\newcommand{\bH}{{\bf H}}
\newcommand{\bK}{{\bf K}}
\newcommand{\bm}{{\bf m}}
\newcommand{\bN}{{\bf N}}
\newcommand{\bC}{{\bf C}}
\newcommand{\bL}{{\bf L}}
\newcommand{\bF}{{\bf F}}
\newcommand{\bv}{{\bf v}}
\newcommand{\bSigma}{{\bf \Sigma}}
\newcommand{\bS}{{\bf S}}
\newcommand{\bs}{{\bf s}}
\newcommand{\bO}{{\bf O}}
\newcommand{\bQ}{{\bf Q}}
\newcommand{\btr}{{\mbox{\boldmath{$tr$}}}}
\newcommand{\bNSCM}{{\bf NSCM}}
\newcommand{\barg}{{\bf arg}}
\newcommand{\bmax}{{\bf max}}
\newcommand{\test}{\mbox{$
\begin{array}{c}
\stackrel{ \stackrel{\textstyle H_1}{\textstyle >} } { \stackrel{\textstyle <}{\textstyle H_0} }
\end{array}
$}}
\newcommand{\tabincell}[2]{\begin{tabular}{@{}#1@{}}#2\end{tabular}}
\newtheorem{Def}{Definition}
\newtheorem{Pro}{Proposition}
\newtheorem{Exa}{Example}
\newtheorem{Rem}{Remark}

\title{Distributed Multi-Sensor Fusion Using Generalized Multi-Bernoulli Densities}
\author{\IEEEauthorblockN{Meng Jiang*, Wei Yi*, Reza Hoseinnezhad$^\dag$ and Lingjiang~Kong* }\\
\IEEEauthorblockA{
*University of Electronic Science and Technology of China\\
Email:\{kussoyi, lingjiang.kong\}@gmail.com \\
$^\dag$RMIT University, Australia, Email: rezah@rmit.edu.au
}}

\maketitle
 \thispagestyle{empty}
\begin{abstract}
The paper addresses distributed multi-target tracking in the framework of generalized Covariance Intersection (GCI) over multistatic radar system. The proposed method is based on the unlabeled version of generalized labeled multi-Bernoulli (GLMB) family by discarding the labels, referred as generalized multi-Bernoulli (GMB) family. However, it doesn't permit closed form solution for GCI fusion with GMB family.
To solve this challenging problem, firstly, we propose an efficient approximation to the GMB family which preserves both the probability hypothesis density (PHD) and cardinality distribution, named as second-order approximation of GMB (SO-GMB) density. Then, we derive explicit expression for the GCI fusion with SO-GMB density.
Finally, we compare the first-order approximation of GMB (FO-GMB) density with SO-GMB density in two scenarios and make a concrete analysis of the advantages of the second-order approximation.
Simulation results are presented to verify the proposed approach.
\end{abstract}

\section{Introduction}
Due to their scalability, flexibility, robustness and fault-tolerance, distributed signal processing methods provide unique advantages over similar but centralized techniques. A particular family of such methods are the \textit{distributed sensor fusion} techniques used for multi-object estimation, or more specifically, multi-target tracking (MTT). These techniques are at the core of various distributed multi-sensor MTT systems such as multistatic radar systems. Those systems have received significant attention in the last decade, and have been used in a wide range of applications from traffic monitoring to battlefield surveillance.

A distributed sensor fusion solution for multi-target tracking usually includes two major components: (i) an efficient and robust multi-target  filter to run locally in each node of the sensor network, independent of the network structure, and (ii) an algorithm for distributed fusion of the information received by each node from multiple other nodes, that includes unknown level of correlation.

In terms of choosing the multi-target filter component of an optimal distributed multi-sensor multi-target tracking system, we note the recent development of the notion of labeled random finite sets and their associated filters~\cite{vo2013labeled}. Filters such as Labeled Multi-Bernoulli (LMB) filter~\cite{reuter2014labeled} and Vo-Vo filter (also called Generalized Labeled Multi-Bernoulli (GLMB) filter)\footnote{We follow Malher who named Vo-Vo filter in his book~\cite{mahler2014advances} for the first time.} have been of significant interest due to their superior performance in terms of accuracy of cardinality and state estimation as well as tracking multiple trajectories. Vo-Vo filter~\cite{vo2013labeled,vo2014labeled} provides a closed-form solution to the optimal Bayesian filter, and has shown to outperform the well-known Probability Hypothesis Density (PHD) filter and its cardinalized version, CPHD filter, and the multi-Berboulli (MB) filter in challenging multi-target tracking scenarios.

This paper focuses on the sensor fusion component of the distributed MTT solution, with particular interest in Vo-Vo filter to be chosen as the local multi-target filter running in each node of the sensor network. An effective information fusion algorithm is expected to combine the information generated by a number of sensor (e.g. radar) nodes and to achieve state estimates and target tracks that are in maximum consistence with \textit{all} the information obtained from multi-sensor measurements. Because of the unacceptable cost of computing the common information between nodes, optimal fusion~\cite{chong1990distributed} is ruled out and one needs to resort to robust suboptimal fusion rules. Mahler \cite{mahler2000optimal} proposed the Generalized Covariance Intersection (GCI) fusion rule based on Exponential Mixture Densities (EMDs). Using this rule, both Gaussian and non-Gaussian formed multi-target distributions from different radars with completely unknown correlation, can be fused sub-optimally.

Following the introduction of GCI fusion rule by Mahler, Clark \textit{et al}.~\cite{clark2010robust} developed tractable formulations for GCI-based fusion of  multi-target posteriors. This work was followed by particle implementation~\cite{uney2013distributed} and Gaussian mixture implementation~\cite{battistelli2013consensus} of distributed fusion of Poisson posteriors (suitable for solution designs involving PHD filters working in each node), then a distributed track-before-detect (TBD) solution based on using local Bernoulli filtering of measurements provided by a Doppler-shift sensor network~\cite{guldogan2014consensus}. Wang \textit{et al}.~\cite{wang2015distributedf} recently presented a distributed fusion method with multi-Bernoulli (MB) filters~\cite{vo2009cardinality, dunne2012member, vo2010joint, lian2012convergence, wei2009mobile, wei2010sensor,hoseinnezhad2012visual} working in each node, based on GCI rule.

With the recent development of labeled set filters and their advantageous performance compared to previous (unlabeled) random set filters, the design of new distributed sensor fusion systems (with labeled set filters working in each node) is of both fundamental and practical interest. The major task here is to develop tractable algorithms for sufficiently accurate GCI-based fusion of labeled random set posteriors. This is a challenging task because of the \textit{label space mismatching phenomenon}; the same realization can be drawn from label spaces of different sensor nodes, which do not have the same implication. To tackle this problem,
based on the assumption that all the sensor nodes share the same label space for the birth process, Fantacci~\textit{et al}.~\cite{fantacci2015consensus} implementes the GCI fusion with labeled set filters by using the consistent label directly.
In \cite{fantacci2015consensus}, analytic formulas for distributed fusion using labeled RFSs are presented and the approach performers well in the situation where the label spaces of each radar node are matching.

An alternative approach is to perform GCI-based fusion with the \textit{unlabeled} versions of the distributions in the GLMB family, named as generalized multi-Bernoulli (GMB) family. Wang~\textit{et al}.~\cite{wang2015distributed,GCI-LM} proposed a tractable solution to the GCI fusion of GMB posteriors via approximating each with a multi-Bernoulli (MB) distribution that matches its first-order moment (its PHD). The approximate MB distribution was also referred to as the first-order approximation of GMB (FO-GMB) distribution.

In this paper, we focus on the second fusion approach and address the problem of the distributed GCI-based fusion with labeled set filters.  Inspired by the approach through which PHD filter was extended to CPHD, and LMB to M$\delta$-GLMB, we present a second order approximation to a GMB density (SO-GMB) that matches not only its PHD but also its cardinality distribution. Just as the CPHD and M$\delta$-GLMB filters perform better than PHD and LMB filters (because both preserve the second-order characteristics), we expect that distributed fusion of SO-GMBs performs better than FO-GMBs. We formulate a tractable GCI-based fusion rule for SO-GMB densities. The fused posterior turns out to be another GMB distribution, and the formula enables sequential fusion within a network of more than two radar nodes. We analyze the performance of GCI-based fusion with SO-GMB densities in different application scenarios. The simulation results indicate that while the performance of SO-GMB fusion is slightly better than FO-GMB fusion in simple tracking scenarios, in challenging situations where the targets are crossing and move in close proximity, SO-GMB fusion significantly outperforms FO-GMB fusion.

\section{Background}
This section provides a brief review of GLMB RFS, GMB RFS and GCI fusion necessary for the results of this paper. For further details, we refer the reader to \cite{wang2015distributed}.
\subsection{Notation }
In this paper, we adhere to the convention that single-target states are denoted by the small letters, e.g.,$x, \mathbf{x}$ while multi-target states are denoted by capital letters, e.g.,$X, \mathbf{X}$. Symbols for labeled states and their distributions/statistics (single-target or multi-target) are bolded to distinguish them from unlabeled ones, e.g., $\mathbf{x}, \mathbf{X}, \bpi$, etc. To be more specific, the labeled single target state $\bx$ is constructed by augmenting a state $x\in\mathbb{X}$ with a label $\ell\in\mathbb{L}$. Observations generated by single-target states are denoted by the small letter, e.g., $z$, and the multi-target observations are denoted by the capital letter, e.g., $Z$.
Additionally, blackboard bold letters represent spaces, e.g., the state space is represented by $\mathbb{X}$, the label space by $\mathbb{L}$, and the observation space by $\mathbb{Z}$. The collection of all finite sets of $\mathbb{X}$ is denoted by $\mathcal{F}(\mathbb{X})$ and $\mathcal{F}_n(\mathbb{X})$ denotes all finite subsets with $n$ elements.

Moreover, in order to support arbitrary arguments like sets, vectors and integers, the generalized Kronecker delta function is given by
\begin{equation}\label{delta}
  \delta_Y(X)\triangleq\left\{\begin{array}{l}
\!\!1, \,\,\,\,\mbox{if $X = Y$} \\
\!\!0, \,\,\,\,\mbox{otherwise}
\end{array}\right.
\end{equation}
and the inclusion function is given by
\begin{equation}\label{inclusion function}
  1_Y(X)\triangleq\left\{\begin{array}{l}
\!\!1, \,\,\,\,\mbox{if $X \subseteq Y$}\\
\!\!0. \,\,\,\,\mbox{otherwise}
\end{array}\right.
\end{equation}

\subsection{GLMB RFS }
The GLMB multi-object distribution was recently formulated and introduced by  Vo and Vo~\cite{vo2013labeled}. Approximating the prior by this general type of distribution for labeled multiple objects forms the basis of an analytic solution to the Bayes multi-object filter. Under the standard multi-object model, the GLMB is closed under the Chapman-Kolmogorov equation and is also \textit{a conjugate prior} with the well-known point measurement likelihood. Thus, with a GLMB prior, the predicted and posterior densities are guaranteed to be GLMB as well.

Let $\mathcal{L}:\mathbb{X}\times\mathbb{L}\rightarrow\mathbb{L}$ be the projection $\mathcal{L}((x,\ell))=\ell$, and $\Delta(\bX)=\delta_{|\bX|}(|\mathcal{L}(\bX)|)$ denote the distinct label indicator.
A GLMB is a RFS on $\mathbb{X}\times\mathbb{L}$ distributed according to
\begin{equation}\label{GLMB}
\bpi(\bX)=\Delta(\bX)\sum_{c\in\mathbb{C}}w^{(c)}(\mathcal{L}(\bX))[p^{(c)}]^{\bX}
\end{equation}
where $\mathbb{C}$ is a discrete index set. The weights $w^{(c)}(L)$ and the spatial distributions $p^{(c)}$ satisfy the normalization conditions
\begin{equation}\label{GLMB_1}
\sum_{L\subseteq\mathbb{L}}\sum_{c\in\mathbb{C}}w^{(c)}(L)=1
\end{equation}
\begin{equation}\label{GLMB_2}
\,\,\,\,\,\int p^{(c)}(x,\ell)dx=1.
\end{equation}
The GLMB density (\ref{GLMB}) can be interpreted as a mixture of multi-object exponentials.
\subsection{GMB RFS }
The unlabeled version of GLMB, called Generalized Multi-Bernoulli (GMB) is a multi-object distribution defined in the state space $\mathbb{X}$ and given by~\cite{wang2015distributed}:
\begin{equation}\label{GMB}
\begin{split}
\pi&(\left\{x_1,\ldots,x_n\right\})= \\
&\sum_\sigma\sum_{(\mathcal{I},\phi)\in\mathcal{F}_n(\mathbb{I})\times\Phi}w^{\left(\mathcal{I},\phi\right)}
\prod_{i=1}^{n}p^{\left(\phi\right),\mathcal{I}^v\left(i\right)}\left(x_{\sigma(i)}\right)
\end{split}
\end{equation}
where the summation $\sum_\sigma$ is taken over all permutations on the number $1,\cdots,n$, $\Phi$ is a discrete space, $\mathbb{I}$ is the index set of densities, $\mathcal{I}^v\in\mathbb{I}^{\left|\mathcal{I}\right|}$ is a vector constructed by sorting the elements of set $\mathcal{I}$, $w^{\left(\mathcal{I},\phi\right)}$ and $p^{\left(\phi\right),\imath}(x)$ satisfy
\begin{equation}\label{GMB_1}
\sum_{(\mathcal{I},\phi)\in\mathcal{F}(\mathbb{I})\times\Phi}w^{\left(\mathcal{I},\phi\right)}=1
\end{equation}
\begin{equation}\label{GMB_2}
\,\,\,\,\,\,\,\,\,\,\,\,\,\,\,\,\,\,\,\,\,\,\,\,\,\,\,\,\int{p^{\left(\phi\right),\imath}(x)}dx=1,\imath\in\mathbb{I}.
\end{equation}
The cardinality distribution of GMB RFS is given by
\begin{equation}\label{GMB_pn}
\rho(n)=\sum_{(\mathcal{I},\phi)\in\mathcal{F}_n(\mathbb{I})\times\Phi}w^{\left(\mathcal{I},\phi\right)}.
\end{equation}
Accordingly, the PHD becomes
\begin{equation}\label{GMB_phd}
\begin{split}
v(x)&=\sum_{(\mathcal{I},\phi)\in\mathcal{F}(\mathbb{I})\times\Phi}w^{\left(\mathcal{I},\phi\right)}
\sum_{\imath\in\mathcal{I}}p^{\left(\phi\right),\imath}(x)  \\
&=\sum_{\imath\in\mathbb{I}}\sum_{(\mathcal{I},\phi)\in\mathcal{F}(\mathbb{I})\times\Phi}1_\mathcal{I}(\imath)
w^{\left(\mathcal{I},\phi\right)}p^{\left(\phi\right),\imath}(x).
\end{split}
\end{equation}
\subsection{GCI Fusion Rule }
The GCI fusion rule was proposed by Mahler \cite{mahler2000optimal} specifically to enable the fusion of FISST densities in distributed multi-sensor fusion applications. In a network of sensors (e.g. a radar network), consider the point measurement sets returned by the sensors at two nodes, and denote them by $Z_1^k$ and $Z_2^k$ where $k$ is the current time. Let us also denote the measurement history at node $i$ by $Z_i^{1:k}=(Z_i^{1},\ldots,Z_i^{k})$, and the current multi-object state by $X^k=\{x^k_1,\ldots,x^k_{n}\}$.

At time $k$, node 1 maintains computes its own local multi-object posterior $\pi_{1}(X^k|Z_{1}^{1:k})$ through an update step, and receives the locally updated posteriors from a number of other (probably neighboring) nodes in the network. The fusion problem is how to optimally combine the multiple posteriors so that maximum information is preserved in the fused posterior. We are specially interested in a fusion rule that can be implemented in a sequential way. Thus, if one of the nodes communicating its local posterior with node 1 is node 2, the fusion problem is reduced to the problem of combining $\pi_{1}(X^k|Z_{1}^{1:k})$ and $\pi_{2}(X^k|Z_{2}^{1:k})$.

According to the GCI fusion rule, a sub-optimal fused distribution is given by the following geometric mean (or exponential mixture) of the local posteriors,
\begin{equation} \label{GCI}
\begin{array}{c}
\pi_{\omega}(X^k|Z_1^{1:k},Z_{2}^{1:k}) = \frac{\pi_{1}(X^k|Z_{1}^{1:k})^{\omega_1}\pi_{2}(X^k|Z_{2}^{1:k})^{\omega_2}}
                                              {\int \pi_{1}(X^k|Z_{1}^{1:k})^{\omega_1}\pi_{2}(X^k|Z_{2}^{1})^{\omega_2}\delta X}
\end{array}
\end{equation}
where $\omega_1$, $\omega_2$ ($\omega_1+\omega_2=1$) are the parameters determining the relative fusion weight of each nodes, and the set integral is computed according to~\cite{mahler2014advances}
\begin{equation*}\label{set integral}
  \int\! f(X)\delta X\!=\sum_{n=0}^\infty \frac{1}{n!}\int\! f(\{x_1,\cdots,x_n\})dx_1\cdots dx_n.
\end{equation*}
It has been shown that among all exponential mixture densities (EMDs), the density given by equation~(\ref{GCI}) minimizes the following weighted sum of distances,
\begin{equation}\label{EMD}
  \pi_\omega=\arg \min_\pi(\omega_1D(\pi\parallel \pi_1)+\omega_2 D(\pi\parallel \pi_2))
\end{equation}
where $D$ denotes Kullback-Leibler divergence (KLD)~\cite{battistelli2013consensus}.
For convenience, in what follows we omit the conditioning on the observations and time index $k$.

\section{GCI-Based Distributed Fusion}
Consider a sensor network used in a multi-target tracking application, and a distributed sensor fusion and tracking algorithm in which a Vo-Vo filter (with GLMB assumption for underlying multi-target distributions) is run locally in each node of the network. To fuse the posteriors coming from neighboring sensor nodes, we assume that local GLMB posteriors are turned into unlabeled GMB densities before fusion. As it was mentioned earlier, Wang \textit{et al}.~\cite{wang2015distributed} proposed an approximate solution in which each GMB distribution was replaced with an MB density with the same first-order statistical moment. The approximate MB density was called a first order approximation to the GMB density (FO-GMB density) and a closed-form GCI rule-based fusion formula was derived to combine two FO-GMB densities. The fused density was shown to turn into a GMB itself.

In this paper, we follow the same approach explained above, however, we introduce the novel idea that instead of approximating a GMB with an MB density with matching PHD, a better approximation can achieved when both the first moment and the entire cardinality distribution are matched. We call such an approximate density as the second-order GMB (SO-GMB) approximate.

\begin{Def}
Consider the GMB density $\pi$ given in (\ref{GMB}). A SO-GMB density $\hat{\pi}$ corresponding to $\pi$ is given by:
\begin{equation}\label{SO_GMB}
\hat{\pi}(\left\{x_1,\ldots,x_n\right\})=\sum_\sigma\sum_{\mathcal{I}\in\mathcal{F}_n(\mathbb{I})}\hat{w}^{\left(\mathcal{I}\right)}
\prod_{i=1}^{n}\hat{p}^{\mathcal{I}^v\left(i\right)}\left(x_{\sigma(i)}\right)
\end{equation}
where
\begin{eqnarray}
\label{SO_GMB_1}
\hat{w}^{\left(\mathcal{I}\right)}&=&\sum_{\phi\in\Phi}w^{\left(\mathcal{I},\phi\right)} \\
\label{SO_GMB_2}
\hat{p}^{\imath}\left(x\right)&=&\frac{1}{\hat{w}^{\left(\mathcal{I}\right)}}
\sum_{\phi\in\Phi}w^{\left(\mathcal{I},\phi\right)}p^{\left(\phi\right),\imath}(x)
\end{eqnarray}
\end{Def}

\begin{Pro}
The SO-GMB density in (\ref{SO_GMB})-(\ref{SO_GMB_2}) preserves both PHD and cardinality distribution of the original GMB density in (\ref{GMB}).
\end{Pro}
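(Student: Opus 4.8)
The plan is to show that $\hat\pi$ is itself a bona fide GMB density (namely one of the form (\ref{GMB}) in which the discrete parameter space $\Phi$ has been collapsed to a single element, with weights $\hat w^{(\mathcal I)}$ and spatial densities $\hat p^{\imath}$), and then to evaluate its cardinality distribution and PHD using the GMB formulas (\ref{GMB_pn}) and (\ref{GMB_phd}) and compare them term by term with those of $\pi$. The first step is the verification that $\hat\pi$ satisfies the normalization conditions (\ref{GMB_1})--(\ref{GMB_2}): $\sum_{\mathcal I\in\mathcal F(\mathbb I)}\hat w^{(\mathcal I)}=\sum_{(\mathcal I,\phi)\in\mathcal F(\mathbb I)\times\Phi}w^{(\mathcal I,\phi)}=1$ by (\ref{SO_GMB_1}) and (\ref{GMB_1}), and $\int\hat p^{\imath}(x)\,dx=1$ by (\ref{SO_GMB_2}) together with (\ref{GMB_2}) (the degenerate terms with $\hat w^{(\mathcal I)}=0$ carry zero weight and may be excluded by convention). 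This legitimizes the use of (\ref{GMB_pn}) and (\ref{GMB_phd}) for $\hat\pi$.

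For the cardinality distribution, applying (\ref{GMB_pn}) to $\hat\pi$ gives $\hat\rho(n)=\sum_{\mathcal I\in\mathcal F_n(\mathbb I)}\hat w^{(\mathcal I)}$; substituting (\ref{SO_GMB_1}) and re-indexing the resulting double sum over $\mathcal I$ and $\phi$ as a single sum over $\mathcal F_n(\mathbb I)\times\Phi$ reproduces $\rho(n)$ exactly. For the PHD, the key move is to multiply (\ref{SO_GMB_2}) through by $\hat w^{(\mathcal I)}$, obtaining the identity $\hat w^{(\mathcal I)}\hat p^{\imath}(x)=\sum_{\phi\in\Phi}w^{(\mathcal I,\phi)}p^{(\phi),\imath}(x)$, which avoids the division and remains valid when $\hat w^{(\mathcal I)}=0$. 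Writing the PHD of $\hat\pi$ via (\ref{GMB_phd}) as $\hat v(x)=\sum_{\mathcal I\in\mathcal F(\mathbb I)}\hat w^{(\mathcal I)}\sum_{\imath\in\mathcal I}\hat p^{\imath}(x)$, pushing the sum over $\imath\in\mathcal I$ inside, inserting the identity, and re-indexing $(\mathcal I,\phi)$ recovers $v(x)=\sum_{(\mathcal I,\phi)\in\mathcal F(\mathbb I)\times\Phi}w^{(\mathcal I,\phi)}\sum_{\imath\in\mathcal I}p^{(\phi),\imath}(x)$, i.e. the PHD of $\pi$ in (\ref{GMB_phd}).

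The computation is elementary and I do not expect a genuine obstacle; what needs care is only the bookkeeping. Specifically: (i) the interchange of the finite sums over $\mathcal I$, $\phi$ and $\imath$, which is immediate because $\Phi$ and $\mathbb I$ are discrete; (ii) handling the terms with $\hat w^{(\mathcal I)}=0$ by manipulating the product $\hat w^{(\mathcal I)}\hat p^{\imath}$ rather than $\hat p^{\imath}$ on its own; and (iii) recognizing that the definition (\ref{SO_GMB_2}) is precisely the $w^{(\mathcal I,\phi)}$-weighted average of the $p^{(\phi),\imath}$ that makes the per-index contributions to the PHD cancel correctly. With these points in place, both $\hat\rho=\rho$ and $\hat v=v$ follow by direct substitution, completing the proof.
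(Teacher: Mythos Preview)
Your proposal is correct and follows essentially the same route as the paper: both verify the cardinality by collapsing $\hat w^{(\mathcal I)}=\sum_{\phi}w^{(\mathcal I,\phi)}$ and summing over $\mathcal F_n(\mathbb I)$, and both obtain the PHD by using the identity $\hat w^{(\mathcal I)}\hat p^{\imath}(x)=\sum_{\phi}w^{(\mathcal I,\phi)}p^{(\phi),\imath}(x)$ inside the sum over $\mathcal I$ and $\imath\in\mathcal I$. The only difference is presentational: you first recognize $\hat\pi$ as a GMB with a singleton $\Phi$ and then quote the ready-made formulas (\ref{GMB_pn}) and (\ref{GMB_phd}), whereas the paper re-derives those same expressions by integrating $\hat\pi$ from scratch.
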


\begin{proof}
The cardinality distribution of the Marginalized GMB density becomes
\begin{equation}\label{SO_GMB_pn}
\begin{split}
\hat{\rho}(n)&=\frac{1}{n!}\int\hat{\pi}(\left\{x_1,\ldots,x_n\right\})d(\left\{x_1,\ldots,x_n\right\}) \\
&=\frac{1}{n!}\sum_\sigma\sum_{\mathcal{I}\in\mathcal{F}_n(\mathbb{I})}\hat{w}^{\left(\mathcal{I}\right)} \\
&=\sum_{\mathcal{I}\in\mathcal{F}_n(\mathbb{I})}\hat{w}^{\left(\mathcal{I}\right)} \\
&=\sum_{(\mathcal{I},\phi)\in\mathcal{F}_n(\mathbb{I})\times\Phi}w^{\left(\mathcal{I},\phi\right)}
\end{split}
\end{equation}
which matches the cardinality distribution $\rho(n)$ given in equation~(\ref{GMB_pn}).

\noindent The PHD is given by
\begin{equation}
\nonumber
\begin{array}{l}
v(x_1)=  \sum_{n=0}^{\infty}\frac{1}{n!}\int\hat{\pi}(\{x_1,x_2,\ldots,x_{n+1}\})\ dx_2\ \ldots\ dx_{n+1}\\  \\
\ \ \ \ =  \sum_{n=0}^{\infty}\frac{1}{n!}\int
\sum_\sigma\sum_{\mathcal{I}\in\mathcal{F}_{n+1}(\mathbb{I})}\hat{w}^{\left(\mathcal{I}\right)}
\prod_{i=1}^{n+1}\hat{p}^{\mathcal{I}^v\left(i\right)}\left(x_{\sigma(i)}\right)\
\\
\hspace{6.2cm} dx_2\ \cdots\ dx_{n+1}.
\end{array}
\end{equation}
After substituting the weight terms $\hat{w}^{\left(\mathcal{I}\right)}$ with their equivalent from equation~(\ref{SO_GMB_1}), and the density terms $\hat{p}^{\mathcal{I}^v\left(i\right)}\left(x_{\sigma(i)}\right)$ with their equivalents from equation~(\ref{SO_GMB_2}), i.e.
\begin{equation}
\nonumber
\begin{array}{rcl}
\hat{w}^{\left(\mathcal{I}\right)}&=&\sum_{\phi'\in\Phi}w^{\left(\mathcal{I},\phi'\right)} \\
&& \\
\hat{p}^{\mathcal{I}^v\left(i\right)}\left(x_{\sigma(i)}\right))&=&\dfrac{
\sum_{\phi''\in\Phi}w^{\left(\mathcal{I},\phi''\right)}p^{\left(\phi''\right),\mathcal{I}^v\left(i\right)}(x_{\sigma(i)})
}
{\sum_{\phi'\in\Phi}w^{\left(\mathcal{I},\phi'\right)}},
\end{array}
\end{equation}
and some algebraic manipulations (reordering the sums and factoring out some terms), the PHD of the SO-GMB turns out to simplify as follows:
\begin{equation}
v(x_1) =\sum_{\mathcal{I}\in\mathcal{F}(\mathbb{I})}
\sum_{\phi\in\Phi}w^{\left(\mathcal{I},\phi\right)}\sum_{\imath\in\mathcal{I}}p^{\left(\phi\right),\imath}(x_1)
\end{equation}
which matches the PHD of original GMB density given by equation~(\ref{GMB_phd}).
\end{proof}

\begin{Rem}
 \normalfont{
PHD is usually perceived as the density of targets, visualizing how their likely places are scattered in the single-target state space. This makes the PHD an instrumental characteristic of multi-target distributions, to the extent that the closest Poisson density to any given density (in terms of Kullback-Leibler distance) is the Poisson density that matches the PHD of the original density. Furthermore, cardinality distribution characterizes the entire probability distribution of the number of targets, using which an EAP or MAP estimate of cardinality can be directly computed~\cite{mahler2007statistical}. Matching the cardinality distributions of two multi-object densities means that they share the same EAP and MAP cardinality estimates. Hence, matching both the PHD and cardinality distributions of two multi-object densities is expected to result in densities that are sufficiently close to each other for estimation and tracking purposes.
 }
\end{Rem}

\begin{Rem}
	\normalfont{
 Note that the SO-GMB density (\ref{SO_GMB}) is also a GMB RFS, especially in the form of the unlabeled version of LMB RFS \cite{wang2015distributed}, and provides the prerequisite condition for distributed fusion with GMB distribution. But the number of terms in SO-GMB distribution is substantially lower than GMB distribution. Indeed,
 the number of components $\left(\hat{w}^{\left(\mathcal{I}\right)},\hat{p}^{\imath}\left(x\right)\right)$ in SO-GMB which need to be stored and computed is $|\mathcal{F}(\mathcal{I})|$ which is substantially smaller than the number of components $\left(w^{(\mathcal{I},\phi)},p^{\left(\phi\right),\imath}(x)\right)$ in the original GMB given by $|\mathcal{F}(\mathcal{I})\times\Phi|$ for $w^{(\mathcal{I},\phi)}$ plus $|\Phi|$ for $p^{\left(\phi\right),\imath}(x)$.

 Before the implementation of the GCI fusion, one need to firstly compute the $\hat{w}^{\left(\mathcal{I}\right)}$ and $\hat{p}^{\imath}\left(x\right)$ under each hypothesis $\mathcal{I}$ according to (\ref{SO_GMB_1}) and (\ref{SO_GMB_2}), which is similar to the marginalization with respect to the association histories performed in M$\delta$-GLMB .
 }
\end{Rem}
\subsection{GCI Fusion Based on SO-GMB Density }
\begin{Def}
A fusion map (for the current time) is a function $\tau:\mathbb{I}_1\rightarrow\mathbb{I}_2$ such that $\tau(i)=\tau(i^*)$ implies $i=i^*$, the set of all such fusion maps is called fusion map space denoted by $\mathcal{T}$. The subset of $\tau$ with domain $\mathcal{I}$ is denoted by $\mathcal{T}(\mathcal{I})$.
\end{Def}
\begin{Rem}
 \normalfont{
The fusion maps play the same role of the measurement-track association map in the Vo-Vo filter~\cite{vo2014labeled}, but note that the fusion maps require one set of tracks in radar node 2 has the same cardinal number in radar node 1.
}
\end{Rem}
\begin{Pro}
The EMD $\pi_\omega(X)$ of the two SO-GMB distributions in (\ref{SO_GMB}), can be approximated as a GMB distribution of the form
\begin{equation}\label{SO_GMB_fusion}
\begin{split}
\pi_\omega&(\left\{x_1,\ldots,x_n\right\})\approx \\ &\sum_\sigma\sum_{(\mathcal{I},\tau)\in\mathcal{F}_n(\mathbb{I}_1)\times\mathcal{T}(\mathcal{I})}w_{\omega}^{(\mathcal{I},\tau)}\prod_{i=1}^{n}p_\omega^{(\tau),\mathcal{I}^v(i)}(x_{\sigma(i)})
\end{split}
\end{equation}
where
\begin{equation}\label{SO_GMB_fusion1}
\begin{split}
w_{\omega}^{(\mathcal{I},\tau)}=\frac{\bar{w}_{\omega}^{(\mathcal{I},\tau)}}
{\displaystyle{\sum_{(\mathcal{I},\tau)\in\mathcal{F}(\mathbb{I}_1)\times\mathcal{T}(\mathcal{I})}\bar{w}_{\omega}^{(\mathcal{I},\tau)}}}
\end{split}
\end{equation}
\begin{equation}\label{SO_GMB_fusion2}
\begin{split}
\bar{w}_{\omega}^{(\mathcal{I},\tau)}=\hat{w}_1(\mathcal{I})^{\omega_1}\hat{w}_2(\tau{(\mathcal{I}))}^{\omega_2}
\prod_{\imath\in\mathcal{I}}\int\hat{p}_1^\imath(x)^{\omega_1}\hat{p}_2^{\tau(\imath)}(x)^{\omega_2}dx
\end{split}
\end{equation}
\begin{equation}\label{SO_GMB_fusion3}
\begin{split}
p_\omega^{(\tau),\imath}(x)=\frac{\hat{p}_1^\imath(x)^{\omega_1}\hat{p}_2^{\tau(\imath)}(x)^{\omega_2}}
{\int\hat{p}_1^\imath(x)^{\omega_1}\hat{p}_2^{\tau(\imath)}(x)^{\omega_2}dx},\,\,\,\,\imath\in\mathcal{I}
\end{split}
\end{equation}
\begin{equation}\label{SO_GMB_fusion4}
\begin{split}
\hat{w}(\mathcal{I})=\hat{w}^{(\mathcal{I})}
\end{split}
\end{equation}
\end{Pro}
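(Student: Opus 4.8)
\emph{Proof proposal.} The plan is to feed the two SO-GMB densities into the GCI rule~(\ref{GCI}) and then evaluate the power-and-product operations hypothesis by hypothesis, using the ``well-separated targets'' approximation that already underlies the Vo--Vo update and the FO-GMB fusion of~\cite{wang2015distributed,GCI-LM}. Fix a cardinality $n$ and a realization $\{x_1,\ldots,x_n\}$. The first step uses the elementary fact that when the single-target densities $\{\hat p_j^\imath\}_{\imath\in\mathbb{I}_j}$ attached to distinct indices have (approximately) disjoint supports, only one hypothesis $\mathcal{I}_j$ and one permutation contribute appreciably to $\hat\pi_j(\{x_1,\ldots,x_n\})$ in~(\ref{SO_GMB}); consequently the exponent $\omega_j$ may be pushed through the outer sum,
\[
\hat\pi_j(\{x_1,\ldots,x_n\})^{\omega_j}\approx\sum_\sigma\sum_{\mathcal{I}_j\in\mathcal{F}_n(\mathbb{I}_j)}\bigl(\hat w_j^{(\mathcal{I}_j)}\bigr)^{\omega_j}\prod_{i=1}^n\bigl(\hat p_j^{\mathcal{I}_j^v(i)}(x_{\sigma(i)})\bigr)^{\omega_j},
\]
with equality in the disjoint-support limit. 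This is the only source of approximation.

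Next I would multiply the $j=1$ and $j=2$ expansions and re-index. The summand of the resulting quadruple sum over $(\mathcal{I}_1,\sigma_1)$ and $(\mathcal{I}_2,\sigma_2)$ carries, for each point $x_k$, a factor $\hat p_1^{\imath_1}(x_k)^{\omega_1}\hat p_2^{\imath_2}(x_k)^{\omega_2}$, which is negligible unless $\hat p_1^{\imath_1}$ and $\hat p_2^{\imath_2}$ overlap; by the same separation argument this matches each index used from $\mathbb{I}_1$ to exactly one index used from $\mathbb{I}_2$, i.e.\ it forces $|\mathcal{I}_1|=|\mathcal{I}_2|=n$ and $\mathcal{I}_2=\tau(\mathcal{I}_1)$ for a bijection $\tau\in\mathcal{T}(\mathcal{I}_1)$. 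The two permutation sums then collapse to a single $\sum_\sigma$ alongside the sum over $\tau$, so the numerator of~(\ref{GCI}) becomes
\[
\sum_\sigma\sum_{(\mathcal{I},\tau)\in\mathcal{F}_n(\mathbb{I}_1)\times\mathcal{T}(\mathcal{I})}\hat w_1(\mathcal{I})^{\omega_1}\hat w_2(\tau(\mathcal{I}))^{\omega_2}\prod_{i=1}^n\hat p_1^{\mathcal{I}^v(i)}(x_{\sigma(i)})^{\omega_1}\hat p_2^{\tau(\mathcal{I}^v(i))}(x_{\sigma(i)})^{\omega_2}.
\]

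The remaining steps are bookkeeping. For each $\imath\in\mathcal{I}$, write $\hat p_1^\imath(x)^{\omega_1}\hat p_2^{\tau(\imath)}(x)^{\omega_2}=\bigl(\int\hat p_1^\imath(y)^{\omega_1}\hat p_2^{\tau(\imath)}(y)^{\omega_2}\,dy\bigr)p_\omega^{(\tau),\imath}(x)$ with $p_\omega^{(\tau),\imath}$ as in~(\ref{SO_GMB_fusion3}); pulling the $|\mathcal{I}|$ normalizing integrals out of the product produces precisely the unnormalized weight $\bar w_\omega^{(\mathcal{I},\tau)}$ of~(\ref{SO_GMB_fusion2}). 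The denominator of~(\ref{GCI}) is the set integral of this numerator, and since each $p_\omega^{(\tau),\imath}$ integrates to $1$ the $n!$ from integrating the $n!$ permutation terms cancels the $1/n!$ in the set-integral definition, exactly as in the proof of Proposition~1, leaving $\sum_{(\mathcal{I},\tau)\in\mathcal{F}(\mathbb{I}_1)\times\mathcal{T}(\mathcal{I})}\bar w_\omega^{(\mathcal{I},\tau)}$. Dividing normalizes $\bar w_\omega^{(\mathcal{I},\tau)}$ to $w_\omega^{(\mathcal{I},\tau)}$ of~(\ref{SO_GMB_fusion1}), and what remains is literally~(\ref{SO_GMB_fusion})--(\ref{SO_GMB_fusion4}); it is also manifestly a GMB density, with index set $\mathbb{I}_1$ and the discrete space $\Phi$ replaced by the fusion-map space $\mathcal{T}$.

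The main obstacle is the first step together with its companion cross-term suppression in the second step: pushing a fractional exponent through a sum, and discarding products of non-overlapping densities, are exact only in the disjoint-support regime, so the honest statement is that the proposition holds with equality there and with an error governed by the mutual overlap of the $\hat p_j^\imath$ otherwise --- the same regime in which the GLMB/GMB fusion formulas of~\cite{fantacci2015consensus,wang2015distributed} are obtained. Everything else is re-indexing and normalization. Finally it is worth noting that, because the output again has the GMB (indeed unlabeled-LMB) form required as input, the argument can be iterated to fuse more than two nodes sequentially.
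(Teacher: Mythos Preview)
Your proposal follows essentially the same route as the paper: push the fractional exponent through the mixture sum via the approximation $(\sum_i d_i)^\omega\approx\sum_i d_i^\omega$, multiply the two expansions, reindex using fusion maps, separate out the normalizing integrals to obtain $\bar w_\omega^{(\mathcal{I},\tau)}$ and $p_\omega^{(\tau),\imath}$, and evaluate the set integral for the denominator. The paper simply cites the approximation from~\cite{wang2015distributed,battistelli2013consensus,julier2006empirical}, whereas you supply the ``disjoint-support'' rationale for it; that is a welcome addition.

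One correction worth making: your second step does \emph{not} require a further separation/overlap argument. The passage from the quadruple sum over $(\sigma_1,\mathcal{I}_1,\sigma_2,\mathcal{I}_2)$ to the triple sum over $(\sigma,\mathcal{I},\tau)$ is an exact change of variables: fixing $\sigma=\sigma_1$ and $\mathcal{I}=\mathcal{I}_1$, the pair $(\sigma_2,\mathcal{I}_2)$ is in bijection with the injections $\tau:\mathcal{I}\to\mathbb{I}_2$ via the rule ``$\tau(\mathcal{I}^v(i))$ is the sensor-2 index assigned to the point $x_{\sigma(i)}$''. (Count terms: $n!\,|\mathcal{F}_n(\mathbb{I}_2)|=|\mathbb{I}_2|!/(|\mathbb{I}_2|-n)!=|\mathcal{T}(\mathcal{I})|$.) This is exactly how the paper invokes Definition~2 in passing to~(\ref{fusion44}). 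So the only genuine approximation is the first step; the rest is bookkeeping, as you say, and your closing remark about sequential fusion matches Remark~4.
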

\begin{proof}
By substituting (\ref{SO_GMB}) into the term $\hat{\pi}_s(X)^{\omega_s}$ of (\ref{GCI}), where the subscript $s=1,2$, means the sequence number of sensors, we obtain
\begin{equation}\label{fusion1}
\begin{split}
\hat{\pi}_s&(\left\{x_1,\ldots,x_n\right\})^{\omega_s} \\
&=\left(\sum_\sigma\sum_{\mathcal{I}\in\mathcal{F}_n(\mathbb{I})}\hat{w}_s^{\left(\mathcal{I}\right)}
\prod_{i=1}^{n}\hat{p}_s^{\mathcal{I}^v\left(i\right)}\left(x_{\sigma(i)}\right)\right)^{\omega_s} \\
&=\left(\sum_\sigma\sum_{\mathcal{I}\in\mathcal{F}_n(\mathbb{I})}\hat{w}_s{\left(\mathcal{I}\right)}
\prod_{i=1}^{n}\hat{p}_s^{\mathcal{I}^v\left(i\right)}\left(x_{\sigma(i)}\right)\right)^{\omega_s}
\end{split}
\end{equation}
Motivated by \cite{wang2015distributed,battistelli2013consensus,julier2006empirical}, we use the approximation
\begin{equation}\label{fusion2}
\begin{split}
\left(\sum_id_i\right)^\omega\approx\sum_id_i^\omega
\end{split}
\end{equation}
then (\ref{fusion1}) can be rewritten as
\begin{equation}\label{fusion3}
\begin{split}
\hat{\pi}_s&(\left\{x_1,\ldots,x_n\right\})^{\omega_s} \\
&=\sum_\sigma\sum_{\mathcal{I}\in\mathcal{F}_n(\mathbb{I})}\hat{w}_s{\left(\mathcal{I}\right)}^{\omega_s}
\prod_{i=1}^{n}\hat{p}_s^{\mathcal{I}^v\left(i\right)}\left(x_{\sigma(i)}\right)^{\omega_s}.
\end{split}
\end{equation}
By substituting (\ref{fusion3}) into the numerator of (\ref{GCI}) and utilizing Definition 2, we obtain
\begin{equation}\label{fusion44}
\begin{split}
\hat{\pi}_1&(\left\{x_1,\ldots,x_n\right\})^{\omega_1}\hat{\pi}_2(\left\{x_1,\ldots,x_n\right\})^{\omega_2}  \\
&=\sum_{\sigma}\sum_{\mathcal{I}\in\mathcal{F}_n(\mathbb{I}_1)}\sum_{\tau\in\mathcal{T}(\mathcal{I})}\hat{w}_1(\mathcal{I})^{\omega_1}\prod_{i=1}^{n}\hat{p}_1^{\mathcal{I}^v(i)}(x_{\sigma(i)})^{\omega_1}\\
&\,\,\,\,\,\,\,\,\hat{w}_2(\tau{(\mathcal{I})})^{\omega_2}\prod_{i=1}^{n}\hat{p}_2^{\tau{(\mathcal{I}^v(i))}}(x_{\sigma(i)})^{\omega} \\
\end{split}
\end{equation}
\begin{equation*}
\begin{split}
&=\sum_{\sigma}\sum_{\mathcal{I}\in\mathcal{F}_n(\mathbb{I}_1)}\sum_{\tau\in\mathcal{T}(\mathcal{I})}\hat{w}_1(\mathcal{I})^{\omega_1}\hat{w}_2(\tau{(\mathcal{I})})^{\omega_2}\\
&\,\,\,\,\,\,\,\,\prod_{\imath\in\mathcal{I}}\int\hat{p}_1^{\imath}(x_{\sigma(i)})^{\omega_1}\hat{p}_2^{\tau{(\imath)}}(x_{\sigma(i)})^{\omega_2}dx_i \\
&\,\,\,\,\,\,\,\,\prod_{i=1}^{n}\frac{\hat{p}_1^{\mathcal{I}^v(i)}(x_{\sigma(i)})^{\omega_1}\hat{p}_2^{\tau{(\mathcal{I}^v(i))}}(x_{\sigma(i)})^{\omega_2}}{\int\hat{p}_1^{\mathcal{I}^v(i)}
(x_{\sigma(i)})^{\omega_1}\hat{p}_2^{\tau{(\mathcal{I}^v(i))}}(x_{\sigma(i)})^{\omega_2}dx_i}\\
&=\sum_{\sigma}\sum_{\mathcal{I}\in\mathcal{F}_n(\mathbb{I}_1)}\sum_{\tau\in\mathcal{T}(\mathcal{I})}\bar{w}_{\omega}^{(\mathcal{I},\tau)}\prod_{i=1}^{n}p_\omega^{(\tau),\mathcal{I}^v(i)}(x_{\sigma(i)})
\end{split}
\end{equation*}
Thus the denominator of (\ref{GCI}) can be computed as:
\begin{equation}\label{fusion5}
\begin{split}
\int&\hat{\pi}_1(\left\{x_1,\ldots,x_n\right\})^{\omega_1}\hat{\pi}_2(\left\{x_1,\ldots,x_n\right\})^{\omega_2}\delta X \\
&=\sum_{n=0}^{\infty}\frac{1}{n!}\sum_{\sigma}\sum_{\mathcal{I}\in\mathcal{F}_n(\mathbb{I}_1)}\sum_{\tau\in\mathcal{T}(\mathcal{I})}\bar{w}_{\omega}^{(\mathcal{I},\tau)}\\
&=\sum_{\mathcal{I}\in\mathcal{F}(\mathbb{I}_1)}\sum_{\tau\in\mathcal{T}(\mathcal{I})}\bar{w}_{\omega}^{(\mathcal{I},\tau)}.
\end{split}
\end{equation}
Finally, by substituting (\ref{fusion44}) and (\ref{fusion5}) into (\ref{GCI}), we obtain the fused density as the form of (\ref{SO_GMB_fusion}).
\end{proof}
\begin{Rem}
 \normalfont{
The EMD of the two SO-GMB distributions turns out to be another GMB distribution, which can allow the subsequent fusion with another radar node. However, it can be seen from (\ref{SO_GMB_fusion}) that after fusion, each hypothesis $\mathcal{I}\in\mathcal{F}_n(\mathbb{I})$ generates a set of $|\mathcal{T}(\mathcal{I})|$ fusion maps and then we perform all permutations on the number of targets $n$, resulting in an approximate computational complexity of $O\left(|\mathcal{T}(\mathcal{I})|\times n!\right)$. In order to reduce the cost of computation, one can perform truncation of fused GMB density using the ranked assignment strategy~\cite{vo2014labeled, murty1968letter}.
}
\end{Rem}

The schematic diagram shown in Fig.~\ref{framework} shows a graphical representation of the overall distributed fusion process as proposed in this paper. Assuming that at each node of the sensor network, a Vo-Vo filter is locally run, at each time $k$, a local GLMB posterior is computed in each node using the measurement (e.g. the radar scan) acquired by the sensor in that node. Each node also receives GLMB posteriors from the other nodes which are connected to it in the network. The proposed method is then applied locally in each node to fuse the local and all received posteriors.
\begin{figure}
\begin{center}
\includegraphics[width=0.85\columnwidth,draft=false]{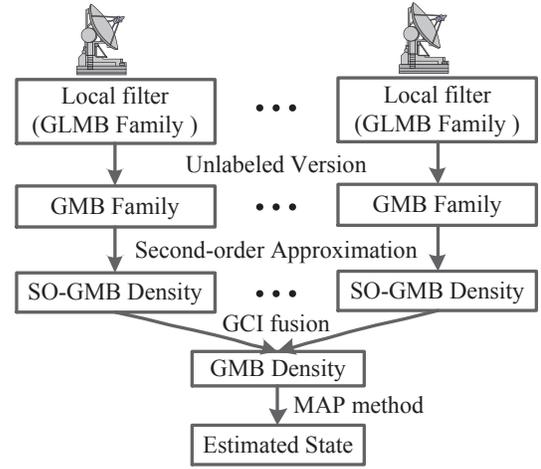}
\end{center}
\caption{Distributed fusion with SO-GMB filter schematic.}
\label{framework}
\end{figure}

The first step of the distributed fusion procedure involves turning all the GLMB posteriors to their equivalent GMBs by marginalizing the labels. Each GMB density is then replaced with its second-order approximation (SO-GMB) which preserves both the PHD and cardinality distribution. The local and incoming posteriors are then sequentially fused using to the GCI-rule formulated for two GMB densities (see Proposition~2). The resulting GMB is expected to sub-optimally encapsulate all the complementary information provided by the neibouring sensors. Thus, a highly accurate multi-object state estimate (MAP  or EAP) can be obtained from the cardinality probability mass function and the location PDFs. As implied by Fig.~\ref{framework}, in our experiments we computed MAP estimates.

\begin{Rem}
\normalfont
Each fused SO-GMB density can also be used to produce trajectories. Once we choose a local radar node and use its label space as pre-image in the fusion map, the fused SO-GMB distribution can inherit and reserve its labels.
\end{Rem}

\section{Simulation Results and Discussion}
In two challenging scenarios, we examined the performance of the proposed GCI-based fusion using SO-GMB distributions, and compared the tracking outcomes (in terms of the resulting OSPA errors~\cite{schuhmacher2008consistent}) with the recent GCI-based fusion rule using FO-GMB approximations. With both methods, the local filters are Vo-Vo filters and the fusion weight of each node $\omega_1,\omega_2$ in (\ref{GCI}) are both chosen as 0.5. All performance metrics are averaged over 100 Monte Carlo runs.

In both scenarios, the single target state includes planar position and velocity, $$x_k=[p_{x,k}\ \ p_{y,k}\ \ \dot{p}_{x,k}\ \ \dot{p}_{y,k}]^\top$$
and each non-clutter point measurement is a noisy version of the planar position of a single-target,
$$z_k=[z_{x,k}\ \ z_{y,k}]^\top.$$
Thus, the single-target measurement model is given by
\begin{eqnarray}
g_k(z_k|x_k) &=& \mathcal{N}(z_k;H_k x_k,R_k) \\
H_k &=& \left[
  \begin{array}{ccc}
    I_2 & 0_2
  \end{array}
\right]
\\
R_k &=& \sigma_{\varepsilon}^2I_2
\end{eqnarray}
where $I_n$ and $0_n$ denote the $n \times n$ identity and zero matrices, respectively, and $\Delta=1$~s is the sampling period.

To model single-target motions, the nearly constant velocity model with the following state transition density is used,
\begin{eqnarray}
f_{k|k-1}(x_k|x_{k-1}) &=& \mathcal{N}(x_k;F_kx_{k-1},Q_k) \\
F_k &=& \left[
  \begin{array}{ccc}
    I_2 & \Delta I_2 \\
    0_2 & I_2 \\
  \end{array}
\right] \\
Q_k &=& \sigma_v^2\left[
  \begin{array}{ccc}
    \frac{\Delta^4}{4}I_2 & \frac{\Delta^3}{2}I_2 \\
    \frac{\Delta^3}{2}I_2 & \Delta^2I_2 \\
  \end{array}
\right]
\end{eqnarray}

The process and measurement noise powers, $\sigma_v^2$ and $\sigma_\varepsilon^2$, are fixed parameters. In our simulations the chosen values are $\sigma_v = 5~\mathrm{m/s}^2$ and $\sigma_{\varepsilon} = 14$~m. The survival probability is $P_{S,k} = 0.98$; target detection in each sensor is independent of the others, and probability of detection at all sensors is $P_D = 0.9$. The number of clutter reports in each scan is Poisson distributed with $\lambda_c = 15.$ Each clutter report is sampled uniformly over the whole surveillance region.

\subsection{Scenario 1}
A set of targets move in the two dimensional region $[0,10000~\mathrm{m}]\times[0,10000~\mathrm{m}]$, all traveling in straight paths with different but constant velocities. The number of targets is time varying due to births and deaths. Trajectories of two targets intersect at  location $(4000~\mathrm{m},2200~\mathrm{m})$ and time $k=5$~s, three targets meet at location $(6000~\mathrm{m},4900~\mathrm{m})$ and time $k=20$~s. The region and tracks are shown in Fig.~\ref{fig: location1}.

\begin{figure}
\begin{center}
\includegraphics[width=0.85\columnwidth,draft=false]{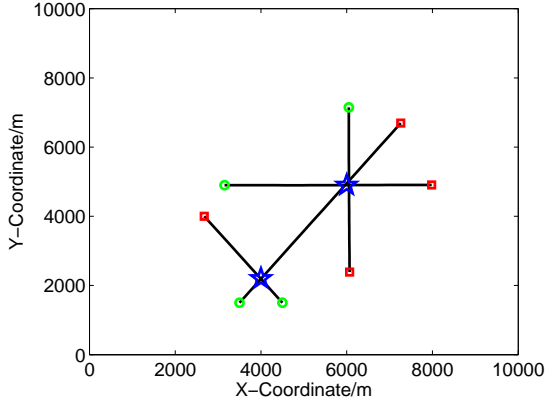}
\end{center}
\caption{Target trajectories considered in the simulation experiment. The start/end point for each trajectory is denoted, respectively, by $\circ|\square$.
The $\star$ indicates a rendezvous point.}
\label{fig: location1}
\end{figure}

The birth model is a LMB RFS with four components, all sharing the same probability of existence of $r_B^{(i)}=0.06$, but having four different Gaussian densities. The Gaussians have the same covariance matrix of $P_B=\mathrm{diag}(100^2,100^2,20^2,20^2)$ but different means,
\begin{equation}
\nonumber
\begin{array}{rcl}
\vspace{2mm}
m_B^{(1)} &=& [3500\ \ 1500\ \ 0\ \ 0]^\top\\
\vspace{2mm}
m_B^{(2)} &=& [4500\ \ 1500\ \ 0\ \ 0]^\top\\
\vspace{2mm}
m_B^{(3)} &=& [3150\ \ 4900\ \ 0\ \ 0]^\top\\
\vspace{2mm}
m_B^{(4)} &=& [6050\ \ 7150\ \ 0\ \ 0]^\top.
\end{array}
\end{equation}

Comparative results in terms of cardinality statistics and OSPA error (with parameters $c = 200$~m, $p = 2$) are shown in Figs.~\ref{fig:card1} and~\ref{fig:ospa1}. In each plot, the results returned by each of the two local Vo-Vo filters are presented along with those resulted from GCI-fusion of FO-GMB densities and from our proposed GCI-based fusion of SO-GMB densities.

From Fig.~\ref{fig:card1}, we observe that in this scenario, all the filters (both local and fusion-based ones) perform similarly in terms of cardinality estimation errors. Figure~\ref{fig:ospa1} shows that the advantage of sensor fusion is more evident in OSPA errors which are substantially smaller with GCI-based fusion methods. However, in this scenario, no substantial improvement seems to be obtained with GCI-based fusion of SO-GMB densities (proposed in this paper) compared to fusion of FO-GMBs. Indeed, a closer look at Fig.~\ref{fig:ospa1} would reveal that OSPA errors returned by fusion of SO-GMBs are visibly lower than the ones returned by FO-GMBs at the times of death or when the targets intersect (are located in close proximity of each other).

\begin{figure}
\begin{center}
\includegraphics[width=0.85\columnwidth,draft=false]{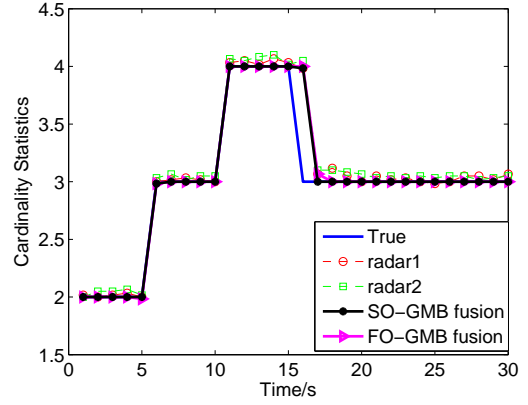}
\end{center}
\caption{Cardinality statistics returned by the local Vo-Vo filter at node 1, the local Vo-Vo filter at node 2, FO-GMB fusion, and SO-GMB fusion in scenario 1. The plotted results are the average of 100 Monte Carlo runs.}
\label{fig:card1}
\end{figure}

\begin{figure}
\begin{center}
\includegraphics[width=0.85\columnwidth,draft=false]{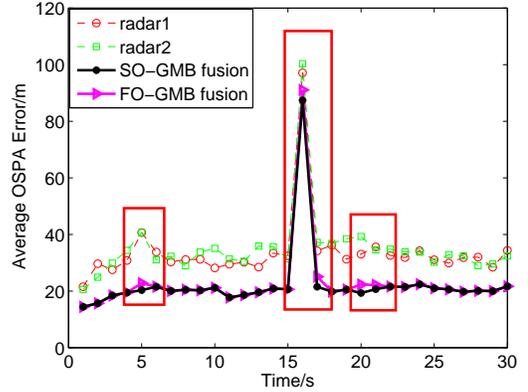}
\end{center}
\caption{OSPA errors returned by the local Vo-Vo filter at node 1, the local Vo-Vo filter at node 2, FO-GMB fusion, and SO-GMB fusion in scenario 1. The plotted results are the average of 100 Monte Carlo runs.}
\label{fig:ospa1}
\end{figure}

\subsection{Scenario 2}
In order to signify the advantages gained by SO-GMB fusion, triggered by the observation exclaimed in the previous paragraph, scenario 2 was designed to include tracks that are in close proximity most of the times, and include deaths and births, as shown in Fig.~\ref{fig:location4}.

\begin{figure}
\begin{center}
\includegraphics[width=0.85\columnwidth,draft=false]{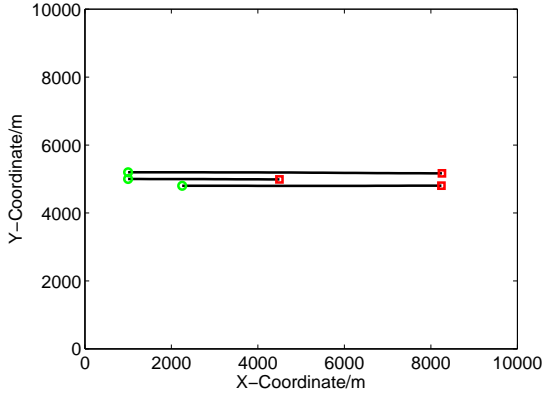}
\end{center}
\caption{Target trajectories in scenario 2. The start/end point for each trajectory is denoted, respectively, by $\circ|\square$.}
\label{fig:location4}
\end{figure}

The true and estimated cardinalities by the FO-GMB fusion method, along with the standard deviation of the estimates over 100 Monte Carlo runs, are together presented in Fig.~\ref{fig:focard2}. Similar plots for SO-GMB fusion cardinality estimates are shown in Fig.~\ref{fig:socard2}. Comparing the standard deviations of cardinality estimates in the two figures would lead to observation that our proposed distributed sensor fusion method outperforms the recent FO-GMB fusion method in terms of having a smaller standard deviation.

\begin{figure}
\begin{center}
\includegraphics[width=0.85\columnwidth,draft=false]{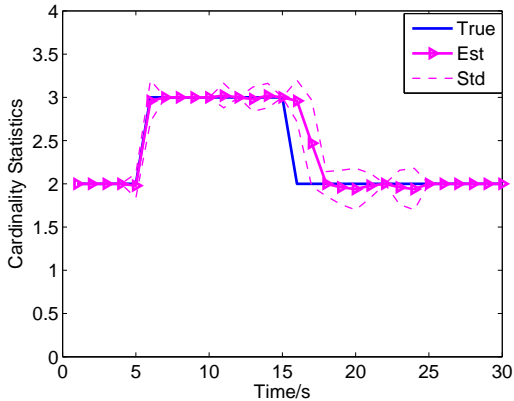}
\end{center}
\caption{Cardinality statistics returned by the FO-GMB fusion method in scenario 2. The plotted results are the average of 100 Monte Carlo runs.}
\label{fig:focard2}
\end{figure}

\begin{figure}
\begin{center}
\includegraphics[width=0.85\columnwidth,draft=false]{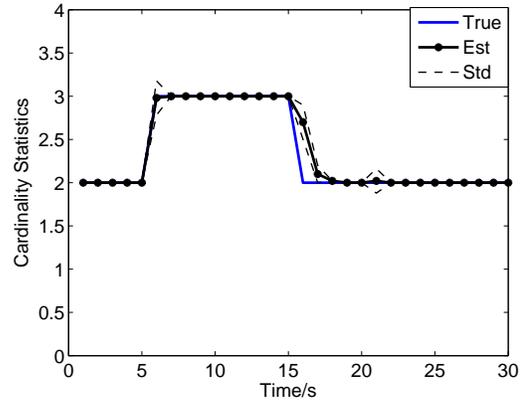}
\end{center}
\caption{Cardinality statistics returned by the SO-GMB fusion method in scenario 2. The plotted results are the average of 100 Monte Carlo runs.}
\label{fig:socard2}
\end{figure}
Figure~\ref{fig:ospa2} shows the average OSPA errors returned by the two fusion methods in this challenging scenario. Again, substantial reduction in estimation error is observed with SO-GMB fusion compared to FO-GMB fusion. Overall, the fusion method proposed in this paper appears to perform with better stability and more accurately and is more suited to tackle the problems introduced by the cardinality changes.

\begin{figure}
\begin{center}
\includegraphics[width=0.85\columnwidth,draft=false]{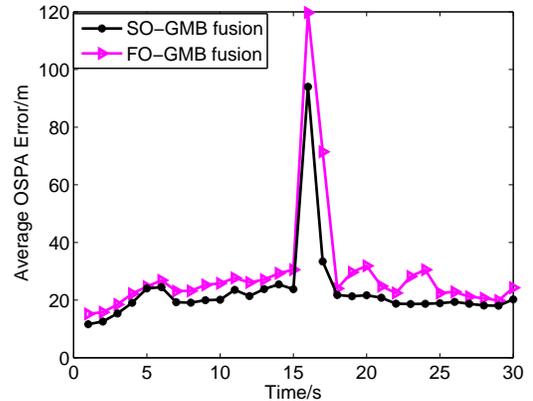}
\end{center}
\caption{Comparison of OSPA errors returned by FO-GMB fusion and SO-GMB fusion in scenario 2. The plotted results are the average of 100 Monte Carlo runs.}
\label{fig:ospa2}
\end{figure}

\section{Conclusion}
In this paper, we address the problem of distributed multi-target tracking with labeled set filters in the framework of GCI with consideration of label space mismatching phenomenon. Based on the notation of GMB family, firstly, we propose an efficient approximation to the GMB family which preserves both the PHD and cardinality distribution, referred to as second-order approximation of GMB (SO-GMB) density. Then, we derive the explicit formula for GCI-based fusion of SO-GMB densities and devise a sequential fusion method for distributed target tracking in sensor networks such as bistatic radar networks. Finally, we compare the recently developed method of GCI-based fusion of  first-order approximation GMB (FO-GMB) densities with our proposed method in two scenarios. The results show that while both methods perform similar in simple tracking situations with no frequent targets deaths and intersections between target trajectories, the SO-GMB fusion method is advantageous in more challenging applications where targets move in close proximity with frequent deaths.

\section*{Acknowledgment}
This work was supported by the Australian Research Council's Discovery Project Program, via the ARC Discovery Project grants DP130104404 and DP160104662, and supported by the National Natural Science Foundation of China under Grants 61301266, the Chinese Postdoctoral Science Foundation under Grant 2014M550465.

\bibliographystyle{ieeetr}
\bibliography{SOGMB}
\end{document}